\theoremstyle{plain}% default
\newtheorem{theorem}{Theorem}[section]
\newtheorem{lemma}[theorem]{Lemma}
\newtheorem{proposition}[theorem]{Proposition}
\theoremstyle{definition}
\newtheorem{definition}{Definition}[section]
\theoremstyle{remark}
\title{Risk Measures Estimation Under Wasserstein Barycenter}
\author{
	\textbf{M. ANDREA ARIAS-SERNA} \\
	{\normalsize Faculty of Engineering} \\
	{\normalsize University of Medellin, Colombia} \\
	{\normalsize marias@udem.edu.co} \\
	\\
	\textbf{JEAN-MICHEL LOUBES} \\
	{\normalsize Toulouse Mathematics Institute} \\
	{\normalsize University Paul Sabatier, France} \\
	{\normalsize loubes@math.univ-toulouse.fr}\\
\\
\textbf{FRANCISCO J. CARO-LOPERA} \\
	{\normalsize Faculty of Basic Sciences} \\
	{\normalsize University of Medellin, Colombia} \\
	{\normalsize fjcaro@udem.edu.co} \\
}
\date{}
\begin{document}

\maketitle

\begin{abstract}

Randomness in financial markets requires modern and robust multivariate models of risk measures.  This paper proposes a new approach for modeling multivariate risk measures under Wasserstein barycenters of probability measures supported on location-scatter families. Simple and advanced copulas multivariate Value at Risk models are compared with the derived technique. The performance of the model is also checked in market indices of United States  generated by the financial crisis due to COVID-19. The introduced model behaves satisfactory in both common and volatile periods of asset prices, providing realistic VaR forecast in this era of social distancing.

\textbf{keywords}
Wasserstein Barycenter; Value at Risk; Conditional Value-at-risk; Location-scatter family; Transportation Cost.

\end{abstract}

\section{Intoduction} \label{agg}

When the well known univariate risk measure analysis is generalized into the multivariate setting, a number of complex theoretical and applied problems appear. This emerging theory just extends the univariate case for VaR estimate by using copulas, theory of extreme value, Monte Carlo method, historical simulation, variance and covariance analysis, historical simulation, among many others. However, some of the univariate translations became unrealistic and are based on inappropriate assumptions. For example, under the restriction of a perfect dependence, the simple summation method computes the total risk by summation of the stand alone risks; a preservative method with statical benefit (see for example, \cite{Embrecht(2013)} and \cite{Li(2015)}). In a similar way, for a large number of assets, the variance covariance method fails, because the estimation of the corresponding matrix is extremely cumbersome  due to the high amount of correlations, see \cite{McNeil(2015)}.

Now, in the context of risk management, the multivariate theory of the extreme value (EVT) (see \cite{McNeil(1999)}) and the popular technique of multivariate copulas (see \cite{Embrecht(2002)}) are useful in some scenarios for VaR estimation in portfolios.  In particular, copulas method attains a robust structure for dependence in financial time series by producing joint distributions with known non gaussian marginal distributions. Modelling the marginal distributions via copulas allows VaR computations with a better performance than the classical methods; but it involves some untractable assumptions in the context of risk measures which are difficult to elucidate; a similar quotation for the multivariate extreme value theory are also addressed by \cite{Jin(2018)} and \cite{Barone(2015)}.

Inspired by the above discussions and some interesting insights collected by \cite{Jianping(2012)} for risk models in banking industry,  this work proposes risk measures based on the Wasserstein barycenter. The new method considers a reliable risk measure based on distances among probabilistic models. The underlying suitable probability laws obey for example, opinions, beliefs and estimates of data sources, in the context of the financial risk. Explicitly, a far away concept in probability theory is brought into the financial models by proposing the named Fréchet risk measures; which are calibrated by certain metrization of the probability measure space. In this case, the well studied metric of Wasserstein supports the method and provides fundamental connections for the rising concept of barycenter in the sense of Agueh and Carlier in [1]. A seminal work for a number of generalizations and applications, see for example, \cite{Bigot(2018)}, \cite{Alvarez(2018)}, \cite{Loubes(2017)} and the references therein. A crucial aspect underlies here for the new method: the proposed barycenter remains invariant under a class of location and scatter set of (finite or infinite) set of probabilities.

The Wasserstein metric has enriched notably the risk management literature, see for example \cite{Kiesel(2016)} and \cite{Feng(2018)}. In particular, as a canonical metric under well defined assumptions, the named robust risk management has been studied under the Wasserstein metric. As a consequence, this work represents risk measures via statistical functionals by hybridizing robustness and continuity under the Wasserstein metric. Then, several financial applications of the Wasserstein metric can be obtained in real time series, where the classical methods provide an excellent scenario for the correctness of the predictions.

The above discussion is organized in this paper as follows: preliminaries about Wasserstein barycenter are given in Section 2, in order to establish the existence and uniqueness theorem for barycenter of distributions under a class of location and scatter  distributions. Then Section 3 defines the Wasserstein Barycenter in risk measures and results for VaR and CVaR estimation are given under the addressed family of scale and location distributions. Finally, Section 4 applies the the results in a portfolio consisting of two assets (the Nasdaq and the S\&P500 stock indices).

\section{Barycenters in the Wasserstein space: General results}

This apart provides the necessary background about the barycenter in a Wasserstein space. First the quadratic transportation cost is presented, in order to introduce the 2-Wasserstein distance. Then, the basic multivariate Value At Risk models are revisited in terms of the  Wasserstein barycenter. Finally, following \cite{Alvarez(2018)} and \cite{Villani(2008)}, the fundamental results on Wasserstein barycenters for measures under a class of scatter and location distributions are provided.

Start with $\mathcal{P}_{2}(\mathbb{R}^d)$ as the set of all probability measures defined on $\mathbb{R}^d$ with a finite second order moment. Denote $\mathcal{P}_{2,ac}(\mathbb{R}^d)$ as the subset of absolutely continuous measures and consider $(\Omega, \sigma,P)$ as a generic probability space. If  $\mu,\nu$ in $P(\mathbb{R}^d)$, are two measures, then  $\mathcal{P}(\mu,\nu)$ will denote the set of all probability measures $\pi$ in the product set $\mathbb{R}^d\times \mathbb{R}^d$. Here,  $\mu$ and $\nu$ are the corresponding first and second marginals.

Now, for two measures $\mu, \nu$ in $\mathcal{P}(\mathbb{R}^d)$, \textbf{the quadratic transportation cost} between  $\mu$ and $\nu$  (also referred as the transportation cost with a quadratics cost function) is defined as follows
$$\mathcal{T}_{2}(\mu,v)=inf_{\pi \in \mathcal{P}(\mu,v)}\int_{\mathbb{R^d}\times\mathbb{R^d}} d(x,y)^2d\pi(x,y).$$

The transportation cost with quadratics cost function endows the set $\mathcal{P}_{2}(\mathbb{R}^d)$ with the metric called \textbf{2-Wasserstein distance} or  Monge-Kantorovich distance metric, which is given by
$$W_{2}(\mu,\upsilon)=\mathcal{T}_{2}(\mu,\nu)^{\frac{1}{2}}.$$

When $d=1$, the 2-Wasserstein distance in the real line is just given by the quantile-like expression:

$$W_{2}^{2}(\mu,\upsilon)=\int_{0}^{1}|F_{\nu}^{-1}(x)-F_{\mu}^{-1}(x)|^2dx,$$

here $F_{\nu}^{-1}$  and  $F_{\mu}^{-1}$) are the quantile function of $\nu$ and  $\mu$, respectively.

Now, in the Euclideam space, the barycenter of points $x_{1},...x_{N}$ with weights $\lambda_{1},...,\lambda_{N}, \lambda_{j}\geq 0$, $\sum_{j=1}^{N}\lambda_{j}=1$, is defined as

$$b=\sum_{j=1}^{N}\lambda_{J}x_{j}.$$

In fact, is the unique minimizer
$$E(y)=\sum_{j=1}^{N}\lambda_{j}|x_{j}-y|^{2}$$

Motivated by the Euclidean version, the Wasserstein barycenter can be defined as follows.

\begin{definition} Let $\mu_{1},...,\mu_{N}$ be  random probability measures over $\mathbb{R}^d$, endowed with positive weights $\lambda_{1},...\lambda_{N}$, with  $\sum_{j=1}^{N} \lambda_{j}=1$. The measure $\mu\in \mathcal{P}_{2}(\mathbb{R}^d)$ is a Wasserstein barycenter,  if $\mu$ is a minimizer of the functional
	\begin{equation}
	E(\mu)=\sum_{j=1}^{N}\lambda_{j}W_{2}^{2}(\mu;\mu_{j})
	\end{equation}
This fact will be denoted by
$\mu_{B}(\lambda)\in Bar((\mu_{j},\lambda_{j})_{1\leq j\leq N})$
	\end{definition}

\begin{definition} We say that the measure $\mu\in \mathcal{P}_{2}(\mathbb{R}^d)$ is a Wasserstein barycenter for the random probability measures $\mu_{1},...,\mu_{N}$ over $\mathbb{R}^d$, endowed with positive weights $\lambda_{1},...\lambda_{N}$, where  $\sum_{j=1}^{N} \lambda_{j}=1$, if $\mu$ is a minimizer of
	\begin{equation}
	E(\mu)=\sum_{j=1}^{N}\lambda_{j}W_{2}^{2}(\mu;\mu_{j})
	\end{equation}
	We will write
$\mu_{B}(\lambda)\in Bar((\mu_{j},\lambda_{j})_{1\leq j\leq N})$
	\end{definition}

Empirical consistency of the Wasserstein barycenter has been studied in \cite{Agueh(2011)}, \cite{Loubes(2015)} and \cite{Loubes(2017)}.

For introducing a fundamental result, we recall the  definition of a location-scatter class.

\begin{definition} If $M^{+}_{d \times d}$ denotes the set of ${d \times d}$ positive definite matrices and  $X_{0}$ is a random vector with measure $\mu_{0}\in\mathcal{P}_{2,ac}(\mathbb{R}^d)$, then the
	 set $\mathfrak{F}(\mu_{0})$ of probability laws defined by
$$ \mathfrak{F}(\mu_{0}):=\{\mathfrak{l}(AX_{0}+m): A\in M^{+}_{d \times d}, m\in \mathbb{R}^d \}$$
 is a location-scatter family induced by positive definite affine transformations from $\mu_{0}$.
\end{definition}

The Wasserstein barycenters of measures on a location-scatter family satisfies the following remarkable property, see \cite{Alvarez(2018)}.
\begin{lemma}[Theorem 3.10., \cite{Alvarez(2018)}]
Let $\mu_{0}\in \mathcal{P}_{2,ac}(\mathbb{R}^d)$, and $\mu \in W_2(P_2(\mathbb{R}^d))$, assume that for every $\omega\in \Omega$, the measure $\mu_{\omega}\in \mathfrak{F}(\mu_{0})$. Then the unique barycenter, $\overline{\mu}$ of $\mu$ also belongs to $\mathfrak{F}(\mu_{0})$. The mean of $\hat{\mu}$ is $\overline{m}:=\int m_{\omega}P(d\omega)$, and the covariance matrix, $\overline{\Sigma}$, is the only positive definite root of the equation
$$\overline{\Sigma}=\int(\overline{\Sigma}^{\frac{1}{2}}\Sigma_1{\omega}\overline{\Sigma}^{\frac{1}{2}})^{\frac{1}{2}}P(d\omega)$$
\end{lemma}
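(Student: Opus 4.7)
The plan is to reduce the infinite-dimensional variational problem defining the barycenter to a finite-dimensional optimization over the parameters $(m,\Sigma)$ of the location-scatter family. The crucial ingredient is the closed form of the $2$-Wasserstein distance between two members of $\mathfrak{F}(\mu_{0})$: if $\mu_i\in\mathfrak{F}(\mu_0)$ has mean $m_i$ and covariance $\Sigma_i$, then the optimal transport map between $\mu_1$ and $\mu_2$ is affine and
$$W_2^2(\mu_1,\mu_2)=|m_1-m_2|^2+\operatorname{tr}\bigl(\Sigma_1+\Sigma_2-2(\Sigma_1^{1/2}\Sigma_2\Sigma_1^{1/2})^{1/2}\bigr),$$
the second summand being the squared Bures metric on $M^{+}_{d\times d}$. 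This formula decouples the location and scatter contributions, which will allow the two parameters of the barycenter to be optimized independently.

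Next I would show that the barycenter $\overline{\mu}$ itself must lie in $\mathfrak{F}(\mu_0)$. The argument rests on geodesic stability: the $W_2$-geodesic between any two members of $\mathfrak{F}(\mu_0)$ stays in $\mathfrak{F}(\mu_0)$, since the McCann interpolation amounts to a linear interpolation of the underlying affine maps acting on $\mu_0$. Combining this with the strict displacement convexity of the Fréchet functional $E$ prevents any minimizer from leaving the family; equivalently, one can invoke the multimarginal formulation of the barycenter, for which the optimal coupling is realized by pushing $\mu_0$ forward through a family of affine maps. Once $\overline{\mu}\in\mathfrak{F}(\mu_0)$ has been established, the functional decomposes as
$$E(\overline{\mu})=\int|\overline{m}-m_\omega|^2\,P(d\omega)+\int\operatorname{tr}\bigl(\overline{\Sigma}+\Sigma_\omega-2(\overline{\Sigma}^{1/2}\Sigma_\omega\overline{\Sigma}^{1/2})^{1/2}\bigr)P(d\omega),$$
and the two summands can be minimized separately.

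The first term is a strictly convex quadratic in $\overline{m}$, so its minimizer is the announced $\overline{m}=\int m_\omega\,P(d\omega)$. For the covariance I would compute the first-order condition on $M^{+}_{d\times d}$: differentiating the Bures term using the derivative of the matrix square root and setting the gradient to zero yields exactly the fixed-point equation
$$\overline{\Sigma}=\int(\overline{\Sigma}^{1/2}\Sigma_\omega\overline{\Sigma}^{1/2})^{1/2}\,P(d\omega).$$
Existence and uniqueness of a positive definite root then follow from a monotonicity and contraction argument for the associated operator on the Bures--Wasserstein cone, in the spirit of \cite{Agueh(2011)}. The main obstacles are (i) the non-escape argument of the second step, which relies delicately on strict displacement convexity together with the absolute continuity of $\mu_0$ (needed to guarantee uniqueness of optimal maps via Brenier's theorem), and (ii) the well-posedness of the fixed-point equation, which must be handled carefully since the map $\Sigma\mapsto\Sigma^{1/2}$ is not globally contractive and one must confine the iteration to a suitable invariant sub-cone.
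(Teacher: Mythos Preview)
The paper does not contain a proof of this lemma: it is quoted verbatim as Theorem~3.10 of \cite{Alvarez(2018)} and used as a black box, so there is no in-paper argument to compare your proposal against. Your sketch is a reasonable outline of the strategy actually followed in \cite{Alvarez(2018)}---affine optimal maps within $\mathfrak{F}(\mu_0)$, the Bures formula for $W_2^2$, separation of the mean and covariance contributions, and a fixed-point characterization of $\overline{\Sigma}$---but since the present paper offers no proof of its own, no further comparison is possible here.
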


This result means that the Wasserstein barycenters are closed respect the location-scatter class.

An interesting case follows for $N$ Gaussian measures on $\mathbb{R}^{d}$:

\begin{lemma}[Theorem 2.5., \cite{Alvarez(2018)}]
	Consider $N$ Gaussian measures $\mu_{1},...,\mu_{N}$ on $\mathbb{R}^{d}$ with corresponding means $m_1,...,m_N$ and positive definite covariances $\Sigma_1,...,\Sigma_N$, and let $\lambda_{1},...,\lambda_{N}$ be positive weights with $\sum_{j=1}^{N} \lambda_{j}=1$. Then the unique barycenter of the normal measures $\mu_{1},...,\mu_{N}$ is the Gaussian distribution with mean $\overline{m_{\lambda}}=\sum_{j=1}^{N}\lambda_{j}m_{j}$  and  covariance matrix $\overline{\Sigma}$, which is the only positive definite root of the equation
	$$\overline{\Sigma}=\sum_{i=1}^{N}\lambda_{i}(\Sigma^{\frac{1}{2}}\Sigma_{i}\Sigma^{\frac{1}{2}})^{\frac{1}{2}}$$
\end{lemma}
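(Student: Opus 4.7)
The plan is to obtain this statement as an immediate corollary of the preceding lemma (Theorem 3.10 of \cite{Alvarez(2018)}) together with the observation that the family of non-degenerate Gaussians on $\mathbb{R}^d$ is itself a location-scatter family. First I would fix the reference measure $\mu_0 = \mathcal{N}(0,I_d)$ and note that every $\mu_j = \mathcal{N}(m_j,\Sigma_j)$ with $\Sigma_j \in M^{+}_{d\times d}$ can be written as $\mathfrak{l}(\Sigma_j^{1/2} X_0 + m_j)$ with $X_0 \sim \mu_0$; hence each $\mu_j \in \mathfrak{F}(\mu_0)$. This places us in the hypotheses of the previous lemma, upon which I would take $P = \sum_{j=1}^N \lambda_j \delta_{\mu_j}$ as the (discrete) probability on $\mathcal{P}_2(\mathbb{R}^d)$ over which we barycenter.

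Next, applying the previous lemma directly yields (i) that the unique barycenter $\overline{\mu}$ lives in $\mathfrak{F}(\mu_0)$, i.e.\ is again Gaussian; (ii) that its mean is $\overline{m} = \int m_\omega P(d\omega) = \sum_{j=1}^N \lambda_j m_j$; and (iii) that its covariance $\overline{\Sigma}$ is the unique positive definite root of
\begin{equation*}
\overline{\Sigma} = \int \bigl(\overline{\Sigma}^{1/2} \Sigma_\omega \overline{\Sigma}^{1/2}\bigr)^{1/2} P(d\omega) = \sum_{i=1}^N \lambda_i \bigl(\overline{\Sigma}^{1/2} \Sigma_i \overline{\Sigma}^{1/2}\bigr)^{1/2},
\end{equation*}
which, up to the (apparent) typographical collapse of $\overline{\Sigma}^{1/2}$ to $\Sigma^{1/2}$ in the statement, is the claimed fixed-point equation.

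To keep the argument self-contained and highlight why the Gaussian case is so clean, I would additionally record the closed form $W_2^2(\mathcal{N}(m_1,\Sigma_1),\mathcal{N}(m_2,\Sigma_2)) = |m_1 - m_2|^2 + \operatorname{tr}\bigl(\Sigma_1 + \Sigma_2 - 2(\Sigma_1^{1/2}\Sigma_2\Sigma_1^{1/2})^{1/2}\bigr)$. Once we know (from step one) that minimization of $E$ may be restricted to Gaussians, this splits $E$ into a mean part and a covariance part. The mean part is a strictly convex quadratic in $m$, trivially minimized at $\overline{m_\lambda}$. The covariance part, viewed as a function of $\Sigma \in M^{+}_{d\times d}$, is differentiable, and its first-order optimality condition (using $\partial_\Sigma \operatorname{tr}(\Sigma_i^{1/2}\Sigma\Sigma_i^{1/2})^{1/2}$ computed via the Daleckii--Krein formula, or equivalently via the optimal transport map between Gaussians) recovers exactly the stated fixed-point equation.

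The main obstacle I expect is the uniqueness of the positive definite solution: the map $T(\Sigma) := \sum_i \lambda_i (\Sigma^{1/2} \Sigma_i \Sigma^{1/2})^{1/2}$ is not obviously contractive in any elementary norm. I would handle this by appealing once more to the previous lemma (which asserts uniqueness in the general location-scatter setting) rather than re-proving it, or, if a direct argument is desired, by invoking strict geodesic convexity of the Wasserstein functional $E$ restricted to the Bures--Wasserstein manifold of Gaussian covariances, which forces at most one critical point.
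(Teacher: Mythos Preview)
The paper does not supply its own proof of this lemma: it is quoted verbatim as Theorem 2.5 of \cite{Alvarez(2018)} and left without argument, so there is no in-paper proof to compare against. Your derivation---recognizing the non-degenerate Gaussians as the location--scatter family $\mathfrak{F}(\mathcal{N}(0,I_d))$, taking $P=\sum_j\lambda_j\delta_{\mu_j}$, and then reading off the mean and the fixed-point equation for the covariance from the preceding lemma---is correct and is exactly the route by which the Gaussian statement is obtained from the general one in the cited source; your supplementary remarks on the explicit $W_2^2$ formula between Gaussians and the resulting decoupling of the mean and covariance optimizations are accurate and helpful but not strictly needed once you invoke the general location--scatter result.
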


According to \cite{Alvarez(2018)}, Wasserstein barycenters inherits the strong computational problems of the classical optimal transportation. However, in the real line some explicit distributions can be obtained.

\begin{proposition}\label{propo}
Let $F_{1}^{-1},...,F_{N}^{-1}$ be the quantile functions corresponding to $\mu_{1},...,\mu_{N}$ in the real line. Thus the barycenter of $\mu_{1},...,\mu_{N}$ is the probability with quantile function $\sum_{j=1}^{N} \lambda_{j}F_{i}^{-1},$
where  $\lambda_{1},...,\lambda_{N}$ are positive weights such that $\sum_{j=1}^{N} \lambda_{j}=1$.
\end{proposition}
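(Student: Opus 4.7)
The plan is to exploit the explicit one-dimensional quantile formula for $W_2^2$ recalled earlier in the excerpt, which converts the barycenter problem into a pointwise Euclidean minimization parametrized by the uniform variable $t\in(0,1)$.

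First I would substitute the quantile representation into the functional, writing
$$E(\mu)=\sum_{j=1}^{N}\lambda_{j}\int_{0}^{1}\bigl(F_{\mu}^{-1}(t)-F_{j}^{-1}(t)\bigr)^{2}dt.$$
Since every integrand is nonnegative and the weights are nonnegative and finite in number, Tonelli's theorem justifies interchanging the sum and the integral, yielding
$$E(\mu)=\int_{0}^{1}\Bigl[\sum_{j=1}^{N}\lambda_{j}\bigl(F_{\mu}^{-1}(t)-F_{j}^{-1}(t)\bigr)^{2}\Bigr]dt.$$
For each fixed $t$ the bracketed quantity is the Euclidean energy associated with the points $F_{1}^{-1}(t),\ldots,F_{N}^{-1}(t)$ and weights $\lambda_{1},\ldots,\lambda_{N}$. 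By the Euclidean barycenter identity already stated in the paper, this is strictly convex in $F_{\mu}^{-1}(t)$ and is minimized, uniquely, at $\sum_{j=1}^{N}\lambda_{j}F_{j}^{-1}(t)$.

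Next I would argue that the pointwise minimizer is actually admissible, i.e.\ that the candidate function $Q(t):=\sum_{j=1}^{N}\lambda_{j}F_{j}^{-1}(t)$ is a genuine quantile function on $(0,1)$. Each $F_{j}^{-1}$ is nondecreasing and left-continuous with at most countably many jumps, and $\lambda_{j}\geq0$, so $Q$ inherits nondecreasingness and left-continuity; hence there exists a unique probability measure $\mu^{*}\in\mathcal{P}_{2}(\mathbb{R})$ whose quantile function is exactly $Q$. Finite second moment of $\mu^{*}$ follows from $\|Q\|_{L^{2}(0,1)}\leq\sum_{j}\lambda_{j}\|F_{j}^{-1}\|_{L^{2}(0,1)}<\infty$ by Minkowski's inequality, using the hypothesis $\mu_{j}\in\mathcal{P}_{2}(\mathbb{R})$. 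Because $\mu^{*}$ attains the pointwise infimum inside the integral, it attains the infimum of $E$ over $\mathcal{P}_{2}(\mathbb{R})$, so it is a Wasserstein barycenter.

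Uniqueness comes for free from the strict convexity of $y\mapsto\sum_{j}\lambda_{j}(y-F_{j}^{-1}(t))^{2}$: any other minimizer $\nu$ must satisfy $F_{\nu}^{-1}(t)=Q(t)$ for almost every $t\in(0,1)$, which forces $\nu=\mu^{*}$. The only mildly delicate step, and hence what I expect to be the main nuisance rather than a deep obstacle, is the verification that the pointwise minimizer over $\mathbb{R}$ coincides with the minimizer over the restricted class of quantile functions; this is precisely where one uses that $\sum_{j}\lambda_{j}F_{j}^{-1}$ remains monotone, so that the unconstrained and constrained minimizations agree.
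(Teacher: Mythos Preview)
Your argument is correct and is exactly the standard derivation: plug in the one-dimensional quantile formula for $W_2^2$, swap sum and integral, minimize pointwise via the Euclidean barycenter identity, and then check that the pointwise minimizer $\sum_j\lambda_jF_j^{-1}$ is an admissible quantile function in $L^2(0,1)$. The admissibility and uniqueness remarks are handled cleanly.

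As for comparison with the paper: the paper does not actually supply a proof of this proposition. It is stated as a background fact, in the same spirit as the two preceding lemmas that are quoted from \cite{Alvarez(2018)} without proof. So there is no ``paper's own proof'' to compare against; your write-up simply fills in the omitted details, and it does so using precisely the quantile representation of $W_2^2$ that the paper recalled a few lines earlier, which is the natural (and essentially the only) route in dimension one.
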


Finally, using proposition \ref{propo}, with $N$ Gaussian distributions, $N(m_{i},\sigma_{i}^{2})$, $i=1,...,k$, on $\mathbb{R}$, then barycenter is Gaussian $N(\sum_{j=1}^{N}\lambda_{j}m_{j}, (\sum_{j=1}^{N}\lambda_{j}\sigma_{j})^2).$\\

This notable aspect will be used in the context of risk measures.

\section{Wasserstein Barycenter Risk Measures}\label{principal}

This section proposes the Wasserstein Barycenter Risk Measures at a confidence level $\alpha$. The research considers risk measures such as Value at Risk and Conditional Value at Risk for a loss random variable defined by $X^+=\sum_{i=1}^{N}\omega_{i}X_i.$ Here $X_1,...,X_N$ are loss random variables attributed to risk types endowed with positive weights $\omega_{1},...\omega_{N}$ (such that  $\sum_{j=1}^{N} \omega_{j}=1$) and over a fixed time period T. Now, for computation of $VaR_{\alpha}(X^+)$,  a joint law for the random vector $(X_1, . . . , X_N)^{'}$ is required. The Wasserstein barycenter can be regarded as the aggregate model for certain set of probability measures. It is also suitable for reaching an "average" distribution. The procedure also considers and optimal selection for the positive weights. They are connected with the source credibility for every prior. Moreover, the weights must be chosen equal when all priors remains acceptable. The equality also holds under unknowing performance reliability of the competing laws.

We are in position for definition of the Wasserstein Barycenter Value-at-Risk.

\begin{definition}\label{defq}
Given the aggregate position $X^{+}$,  a set of measures $M=(\mu_{1},...,\mu_{N})$, a set of weights $\lambda=(\lambda_{1},...,\lambda_{N})\in \mathbb{R}^{N-1}$ and a set of quantiles  $F=(F_{\mu_{1}}^{-1},...,F_{\mu_{N}}^{-1} )$ with $\alpha\in(0,1)$. The Wasserstein Barycenter Value-at-Risk is defined as:
\begin{equation}
VaR_{\alpha}(X^{+},\lambda)=F^{-1}_{\mu_{B}(\lambda)}(\alpha)
\end{equation}
Where $F^{-1}_{\mu_{B}(\lambda)}$ is the quantile function of Wasserstein barycenter $\mu_{B}(\lambda)$ of  $\mu_{1},...,\mu_{N}$ with weights $\lambda_{1},...\lambda_{N}\in \mathbb{R}$, where $\lambda_{j}\geq 0$, $1\leq j\leq N$, $\sum_{j=1}^{N} \lambda_{j}=1$.
\end{definition}

\subsection{Wasserstein Barycenter risk measures for location and scale distributions}

Next we use the notable property that the barycenter of distributions of location and scale families belongs to the same class. This allows to derive a closed-form formulas for the  Wasserstein Barycenter risk measures for location and scale distributions. \\

\begin{theorem}\label{tV} Let $\mu_{1},...,\mu_{N}$ be location and scale measures with corresponding means $m_1,...,m_N$  and standard deviations $\sigma_1,...,\sigma_N$; then the  Wasserstein Barycenter Value-at-Risk ($VaR_{\alpha}(X^{+},\lambda)$) is given by
		\begin{equation}\label{varbary1}
	VaR_{\alpha}(X^{+},\lambda)=\overline{m_{\lambda}}+\overline{\sigma_{\lambda}}G_{Z}^{-1}(\alpha),
	\end{equation}
where $Z=\frac{x_{q}-\overline{m_{\lambda}}}{\overline{\sigma_{\lambda}}}$, $G_{Z}(.)$ is the cumulative distribution functions of the standard random variable, $\overline{m_{\lambda}}=\sum_{j=1}^{N}\lambda_{j}m_{j}$,  $\overline{\sigma_{\lambda}}=\sum_{j=1}^{N}\lambda_{j}\sigma_{j},$ and $\lambda_{j}\geq 0$, $1\leq j\leq N$, $\sum_{j=1}^{N} \lambda_{j}=1.$
\end{theorem}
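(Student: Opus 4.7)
The plan is to reduce the claim to a direct application of Proposition \ref{propo} combined with the standardization identity for one-dimensional location-scale families. Since $X^{+}$ is a scalar aggregate loss, the relevant measures $\mu_{1},\dots,\mu_{N}$ live on $\mathbb{R}$, so the univariate quantile-based formula for the Wasserstein barycenter applies.

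First, I would record the structural fact. Because each $\mu_{j}$ belongs to a common location-scale family generated by a standard law with cumulative distribution function $G_{Z}$, mean $m_{j}$, and standard deviation $\sigma_{j}$, the quantile functions satisfy
\begin{equation*}
F_{\mu_{j}}^{-1}(\alpha) \;=\; m_{j} + \sigma_{j}\,G_{Z}^{-1}(\alpha), \qquad j=1,\dots,N.
\end{equation*}
This is the one-dimensional specialization of the affine transformation $X_{j}=m_{j}+\sigma_{j}Z$ underlying the location-scatter class $\mathfrak{F}(\mu_{0})$ from Definition 2.3.

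Next, I would invoke Proposition \ref{propo}, which gives the quantile function of the barycenter $\mu_{B}(\lambda)$ as the $\lambda$-weighted sum of the individual quantile functions. Substituting the identity above yields
\begin{equation*}
F_{\mu_{B}(\lambda)}^{-1}(\alpha) \;=\; \sum_{j=1}^{N}\lambda_{j}\bigl(m_{j} + \sigma_{j}\,G_{Z}^{-1}(\alpha)\bigr) \;=\; \overline{m_{\lambda}} + \overline{\sigma_{\lambda}}\,G_{Z}^{-1}(\alpha),
\end{equation*}
using $\sum_{j}\lambda_{j}=1$ to collect the location term and the definitions of $\overline{m_{\lambda}}$ and $\overline{\sigma_{\lambda}}$. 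Finally, the Wasserstein Barycenter Value-at-Risk was defined in Definition \ref{defq} precisely as $VaR_{\alpha}(X^{+},\lambda)=F_{\mu_{B}(\lambda)}^{-1}(\alpha)$, so the displayed formula \eqref{varbary1} follows immediately.

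There is no substantial obstacle: the proof is a two-line chain (location-scale standardization, then the quantile barycenter formula). The only point that requires care is the interpretation of "location and scale measures" in the theorem statement --- one must verify that all $\mu_{j}$ share the same standardizing law $Z$, since otherwise the final line of the computation would not collapse into a single $G_{Z}^{-1}(\alpha)$ factor. Once this is fixed (either as a standing assumption or because a common generator $\mu_{0}$ is implicit in the location-scale family $\mathfrak{F}(\mu_{0})$), the argument is routine and no optimization over couplings needs to be performed by hand, thanks to the closed-form one-dimensional barycenter given by Proposition \ref{propo}.
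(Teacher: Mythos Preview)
Your proposal is correct and matches the paper's own approach exactly: the paper's proof consists of the single line ``It follows straightforwardly from \ref{defq} and \ref{propo},'' and you have simply made explicit the location-scale quantile identity and the weighted-quantile barycenter formula that this line encodes.
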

	
\begin{proof}
	It follows straightforwardly from  \ref{defq} and  \ref{propo}.
\end{proof}

The Wasserstein Barycenter Conditional Value-at-Risk is established next:

\begin{theorem}\label{tCV}
Let $\mu_{1},...,\mu_{N}$ be location and scale measures with corresponding means $m_1,...,m_N$  and  standard deviations $\sigma_1,...,\sigma_N$, then the Wasserstein Barycenter Conditional Value-at-Risk ($TCVaR_{\alpha}(X^{+},\lambda)$) is given by
\begin{equation}\label{varbary2}
TCVaR_{\alpha}(X^{+},\lambda)=\overline{m_{\lambda}}+\frac{\frac{1}{\overline{\sigma_{\lambda}}}g_{Z}(G_{Z}^{-1}(\alpha))}{1-\alpha}\overline{\sigma_{\lambda}}^{2}\sigma_{Z}^2.
\end{equation}
where $Z=\frac{x_{q}-\overline{m_{\lambda}}}{\overline{\sigma_{\lambda}}}$, $g_{Z}(.)$ and $G_{Z}(.)$ are the density and cumulative distribution functions of the standard random variable, $\overline{m_{\lambda}}=\sum_{j=1}^{N}\lambda_{j}m_{j}$,
$\overline{\sigma_{\lambda}}=\sum_{j=1}^{N}\lambda_{j}\sigma_{j},$and $\lambda_{j}\geq 0$, $1\leq j\leq N$, $\sum_{j=1}^{N} \lambda_{j}=1.$\\
\end{theorem}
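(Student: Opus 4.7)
The plan is to combine Proposition \ref{propo} with the classical tail-conditional-expectation formula for a random variable in a location-scale family. By Proposition \ref{propo}, the barycenter's quantile function is $F^{-1}_{\mu_{B}(\lambda)}(u)=\sum_{j}\lambda_{j}F^{-1}_{\mu_{j}}(u)$; since each $\mu_{j}$ lies in the same location-scale family we may write $F^{-1}_{\mu_{j}}(u)=m_{j}+\sigma_{j}G_{Z}^{-1}(u)$, and summing shows that $\mu_{B}(\lambda)$ sits in the same family with parameters $(\overline{m_{\lambda}},\overline{\sigma_{\lambda}})$. Consequently the aggregate loss admits the representation $X^{+}=\overline{m_{\lambda}}+\overline{\sigma_{\lambda}}Z$ with $Z$ drawn from the standardized law having cdf $G_{Z}$ and density $g_{Z}$.

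Next I would invoke the defining integral of the conditional Value-at-Risk,
\begin{equation*}
TCVaR_{\alpha}(X^{+},\lambda)=\frac{1}{1-\alpha}\int_{\alpha}^{1}VaR_{u}(X^{+},\lambda)\,du=E[X^{+}\mid X^{+}>VaR_{\alpha}(X^{+},\lambda)],
\end{equation*}
and substitute the expression for $VaR_{u}(X^{+},\lambda)$ supplied by Theorem \ref{tV}. Changing variables $z=(x-\overline{m_{\lambda}})/\overline{\sigma_{\lambda}}$ in the conditional expectation reduces the computation to
\begin{equation*}
TCVaR_{\alpha}(X^{+},\lambda)=\overline{m_{\lambda}}+\frac{\overline{\sigma_{\lambda}}}{1-\alpha}\int_{G_{Z}^{-1}(\alpha)}^{\infty}z\,g_{Z}(z)\,dz,
\end{equation*}
which already exhibits the location-plus-scale-times-tail-moment shape of the claim.

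The main obstacle is the final step: evaluating the tail integral $\int_{G_{Z}^{-1}(\alpha)}^{\infty}z\,g_{Z}(z)\,dz$ and reconciling it with the exact form stated in (\ref{varbary2}). For a standard normal, integration by parts gives the clean identity $\int_{c}^{\infty}z\phi(z)\,dz=\phi(c)$, producing the factor $g_{Z}(G_{Z}^{-1}(\alpha))$; for a generic element of a location-scale family one needs an analogous tail-moment identity, together with a careful bookkeeping of normalizations. This is where the variance $\sigma_{Z}^{2}$ of the standardized density enters the stated formula (it equals one in the Gaussian case) and combines with the explicit $\overline{\sigma_{\lambda}}^{2}$ and $1/\overline{\sigma_{\lambda}}$ pre-factors. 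Once that identity is in place, substitution back into the displayed integral delivers (\ref{varbary2}).
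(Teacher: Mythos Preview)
Your proposal is correct and follows essentially the same route as the paper: write the CVaR as the conditional-expectation integral with respect to the barycenter density, perform the standardizing change of variable $z=(x-\overline{m_{\lambda}})/\overline{\sigma_{\lambda}}$, split off the mean, and arrive at the tail integral $\int_{G_Z^{-1}(\alpha)}^{\infty} z\,g_Z(z)\,dz$. The paper's proof is in fact terser than yours---it assumes directly a density of the form $\frac{c}{\sigma}g(\tfrac12 z^2)$ and jumps from the tail integral to the final expression without the bookkeeping discussion you flag as the ``main obstacle''---so your outline is at least as complete as the original.
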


\begin{proof}
Note that
$$TCVaR_{\alpha}(X^{+},\lambda)=\frac{1}{1-\alpha}\int_{VaR_{\alpha}(X^{+})}^{\infty}x.\frac{c}{\sigma}g\left(\frac{1}{2}\left(\frac{x-\overline{m_{\lambda}}}{\overline{\sigma_{\lambda}}}\right)^{2}\right)dx$$
and by letting  $Z=\frac{x-\overline{m_{\lambda}}}{\sigma}$, we have
$$TCVaR_{\alpha}(X^{+},\lambda)=\frac{1}{1-\alpha}\int_{Z_{q}}^{\infty}c(\overline{m_{\lambda}}+z\overline{\sigma_{\lambda}})g\left(\frac{1}{2}z^{2}\right)dz$$
$$=\overline{m_{\lambda}}+\overline{\sigma_{\lambda}}\frac{1}{1-\alpha}\int_{Z_{q}}^{\infty}cz.g\left(\frac{1}{2}z^{2}\right)dz$$
$$=\overline{m_{\lambda}}+\frac{\frac{1}{\overline{\sigma_{\lambda}}}g_{Z}(G_{Z}^{-1}(\alpha))}{1-\alpha}\overline{\sigma_{\lambda}}^{2}\sigma_{Z}^2.$$
\end{proof}

We now illustrate Theorems \ref{tV} and \ref{tCV} under the Gaussian distribution. \\

\textbf{Normal Distribution:} Let $\mu_{1},...,\mu_{N}$ be Normal measures with corresponding means $m_1,...,m_N$  and  standard deviations $\sigma_1,...,\sigma_N$; then the $VaR_{\alpha}(X^{+},\lambda)$ and the $TCVaR_{\alpha}(X^{+},\lambda)$ are given by

\begin{equation}\label{varbary}
VaR_{\alpha}(X^{+},\lambda)=\overline{m_{\lambda}}+\overline{\sigma_{\lambda}}\Phi^{-1}(\alpha),
\end{equation}

\begin{equation}
TCVaR_{\alpha}(X^{+},\lambda)=\overline{m_{\lambda}}+\overline{\sigma_{\lambda}}\frac{\phi(\Phi^{-1}(\alpha))}{1-\alpha},
\end{equation}

here  $\phi$ stands for the standard Gaussian distribution and $\Phi^{-1}$ is the inverse of the standard Gaussian distribution, $\overline{m_{\lambda}}=\sum_{j=1}^{N}\lambda_{j}m_{j}$,  $\overline{\sigma_{\lambda}}=\sum_{j=1}^{N}\lambda_{j}\sigma_{j},$ and $\lambda_{j}\geq 0$, $1\leq j\leq N$, $\sum_{j=1}^{N} \lambda_{j}=1.$\\

\section{Empirical Analysis: Portfolio Risk under normal Model}

Randomness in financial markets has promoted important research about robust measures of market risk. This problem motivates a profuse study about market risk. An issue involving the risk of loss for an investment under multifactor movements in a market. Some dynamical risk factors consider the interest and exchange rates, commodity risks and capital, among others. Thus, this section focus on estimation of the aggregation VaR for a risk portfolio
ruled by  Nasdaq and S\&P500 stock indices. The Nasdaq log-returns and the S\&P500 log-returns will be denoted as  $X_1$ and $X_2$, respectively.  In this case, the portfolio log-return,  $X^+$, has the form $X^+=\lambda_{1} X_1+\lambda_{2} X_2$. Here $\lambda=(\lambda_{1}, \lambda_{2})$ and $\lambda_{1}$ and  $\lambda_{2}$ are the portfolio weighs of the assets 1 and 2, $X_1$ and $X_2$. Without loss of generality, a portfolio under equal weights, in both indices, is considered. However, it is not a strict restriction and they can change freely. Finally, for the marginal returns a normal distribution is proposed and a one-day period VaR will be considered.

\subsection{Results of Wasserstein Barycenter approach}

Next, the Wasserstein Barycenter VaR is computed by using \ref{varbary}. In this case, each stock is ruled by Normal distribution. We follow the 2972 daily closing prices given by \cite{Landsman(2003)}; a database ranged from January 2nd, 1992 to October 1st, 2003. The dataset is divided into two parts: in sample period and test period. Sample period starts on January 3rd, 1995 and ends with December 7th, 2000. It consist of 750 daily returns of each stock index and offers the historical information needed for estimating VaR.  The test period starts on December 8th, 2000 to December 4th, 2004. VaR estimate accuracy is measured by using the Kupiec test for backtesting the method in small quantiles  $\alpha = 0.1,0.05,0.01,0.005.$ Software R is used for all computations. Table \ref{stati} shows the descriptive statistics of both series.

\begin{table}[H]
	\caption{Descriptive statistics for  log-returns series of daily  Nasdaq  and S\&P500  stock indices  }
	\begin{center}
\begin{tabular}{|c|c|c|c|c|c|c|}
		\hline
	Statistics & Nasdaq  & S\&P500   \\
	\hline

	Mean& 0.00038 & 0.00030  \\
	\hline
	Mean (annualized)& 10.141\% &7.857\%  \\
	\hline 	
	Standard Deviation &0.01694 &0.01076\\
	\hline 	
	Min.& -0.1016800   &  -0.0711275   \\
	\hline
	Median& 0.00122   & 0.00028  \\
	\hline 	
	Max.& 0.13255 & 0.05574  \\
	\hline
	Excess of Kurtosis& 4.91481 &3.78088\\
	\hline
	Asymmetry& 0.01490 &-0.10267\\
	\hline
\end{tabular}
\end{center}
\label{stati}
\end{table}

 According to Table \ref{stati}, the return series distributions of Nasdaq  and S\&P500 have small asymmetry, but strong kurtosis, in particular the first one. Note also that both series present positive means (annualized). \\

Thus, the results of Section \ref{principal} can be used for estimation of the the aggregation Value-at-Risk by using the equation

$$ VaR_{\alpha}(X^{+},\lambda)=-\overline{m_{\lambda}}-\overline{\sigma_{\lambda}}\Phi^{-1}(\alpha),$$

where  $\Phi^{-1}$ holds for the inverse of the standard Gaussian distribution, and the mean and the standard deviation are computed via $\overline{m_{\lambda}}=\sum_{j=1}^{N}\lambda_{j}m_{j}$, and $\overline{\sigma_{\lambda}}=\sum_{j=1}^{N}\lambda_{j}\sigma_{j},$ respectively. \\

The model computation includes both "unfiltered" and "filtered" forms. The filtered model case has also considered the volatility changes of the instrument. Such model will be referred as  Normal*. In the unfiltered Normal VaR (Normal), all the  $\sigma_{j}$'s, $j=1,,,N$ receive the same value of the sample standard deviation. But in the filtered Normal VaR (Normal*) the $\sigma_{j}$'s are estimated with a  Exponentially Weighted Moving Average (Ewma) model, where $\sigma_{t}=\sqrt{(1-\zeta)x_{i}^2+\zeta \sigma_{t-1}^2}$.

Kupiec test evaluates the performance by computing  the exceptions number in the corresponding test period.  In this case, $H_0:$ represents the null hypothesis and $1-\alpha$ is the probability of an exception occurrence. If $m$ is the number of observations for the test period and $x$ denotes the expected frequency of exceptions, then $h=\frac{x}{m}$ is the difference between the observed frequency of losses and VaR. The test statistics,

The corresponding test statistics is given by,

$$LR=-2[ln(p^x(1-p^{m-x}))-ln(h^x(1-h)^{m-x})]\sim \chi^2(1)$$.

It rejects the null hypothesis, with a 95\% confidence level, for $LR>\chi^2(1)$. In that case, the VaR estimations are not statistical meaningful generated by the particular VaR model, see \cite{McNeil(1999)}.\\

The dataset  under consideration is now divided into sample and test periods, with a selected window of 750 observations; and since there are 2971 observations available, then 2220 VaR tests can be performed at each level can be performed. The corresponding results are presented in Table  \ref{bary}.

\begin{table}[H]
	\caption{Wasserstein Barycenter Value-at-Risk, for t = 751 to 2971, number of exceptions (in brackets) where the estimated VaR was exceeded by the portfolio loss with $\alpha = 0.01, 0.05, 0.01, 0.005.$ P-values of tests.}
	\begin{center}
		\begin{tabular}{|c|c|c|c|c|c|c|}
			\hline
				Model &0.1 (222) &0.05 (111)&0.01 (22) &0.005 (11)		\\
			\hline
			\hline
					Wasserstein Barycenter-N &  0.0489  &0.0442 &  0.0312  &0.0243  \\		
			
			\hline
			Number of exceptions &225 &130  &46 &30
			\\
			\hline 	

			P-Value &0.8323& 0.0713& 9.1307e-06 &2.7020e-06
			\\
			\hline

			 VaR Model rejected &No  &No   & Yes    &Yes
				\\
			\hline
			\hline
			
			Wasserstein Barycenter-N* &  0.0387  &0.0349&  0.0247 &0.0192 \\
			\hline

			Number of exceptions &207 & 110 &23 &16
			\\
			\hline 	
			P-Value & 0.2837&  0.9223& 0.8653& 0.1668
			\\
			\hline
			VaR Model rejected &No &No &No &No
			\\
			
			\hline
			\hline
				
		\end{tabular}
	\end{center}
	\label{bary}
\end{table}

For all the $\alpha$ levels, the Wasserstein Barycenter-N* model showed the best performance in the VaR estimation. Moreover,  for $\alpha= 0.1,0.05$,  the Wasserstein Barycenter-N model also provided a satisfactory behavior.
In terms of the Kupiec test, applied to the number of exceptions for the Wasserstein Barycenter-N * model,  the null hypothesis were not rejected for all the $\alpha$ levels under consideration. In particular, high p values of  0.2837, 0.9223, 0.8653 and 0.1668  were obtained for Wasserstein Barycenter-N * model, and  0.8323 and 0.07130 for the Wasserstein Barycenter-N model at  $\alpha= 0.1,0.05$ levels.

\subsection{Comparisons }

According to \cite{Jianping(2012)}, most of the banks consider simple methods for assessing risk.  Complex methods involving copulas, simulation, hybrids models and advanced probability theory are implemented by few large banks.

Next apart will provide a summary of the variance-covariance and simple summation methods. Then a performance comparison with the proposed model is given. At the end, the new method is also confronted with the sophisticated hybrid copulas methodology given by \cite{Landsman(2003)}.

\subsubsection{Basic Multivariate VaR Models}

In the referred context of \cite{Jianping(2012)}, a research by the IFRI and CRO Forum's24 showed that 60\% or more of the studied banks consider simple approaches as the var-cov method in order to aggregate risk. And at least a little more advanced approaches (such as the supported methods by simulation) were used by only 20\% or less of the financial institutions in the survey. In fact, the summation setting prevails in most of the banks, a general model including variance-covariance method and simple summation. In the latest case the risks can be seen as explicitly weighted, meanwhile for the simple integration, the risks appear implicitly weighted.

This section computes the total VaR aggregated under several methods and  different confidence levels. For correctness of our results different performances with published results were studied. The methodology starts with a series of univariate portfolio returns; then by the use of simple sum and variance-covariance methods, the  aggregate VaR is estimated. These results are briefly described in the next lines.\\

\textbf{Simple Summation method}: The integration of $N$ risks is intuitively reached by aggregating the risks under summation of the particular $VaR_{\alpha}(X_{i})$ of each risk $X_{i}$, $i=1,\ldots,N$. Then the total aggregated VaR, $VaR_{\alpha}(X^{+})$, is expressed as:

\begin{equation}\label{sum1}
VaR_{\alpha}(X^{+})=-\sum_{i=1}^{N} VaR_{\alpha}(X_{i}),
\end{equation}
 see \cite{Embrecht(2013)} and \cite{Li(2015)}.\\

Now, as usual, the Gaussian model supports several methods in probability and statistics studies. In particular, the VaR of multivariate Gaussian laws are a common parametric method for multivariate VaR models. The technique supposes a multivariate Gaussian distribution (with mean  $\mu$ and covariance $\Sigma$ ) for  the returns of  the components in the portfolio. The method is characterized as follows:

\textbf{Variance-Covariance method}: If $\sigma_{+}=\sqrt{\lambda \Sigma \lambda'}.$ and  $\mu_{+}=\lambda \mu$ are respectively the deviation and expected portfolio return, then, the estimation of the Value at Risk for the corresponding multivariate Gaussian distribution returns is given by

\begin{equation}\label{cov}
VaR_{\alpha}(X^{+})=-\mu_{+}-\sigma_{+}\Phi^{-1}(\alpha),
\end{equation}

as before,  $\Phi^{-1}$ represents the inverse of the standard Gaussian distribution.

The covariance matrix and the mean vector in the Var-Cov approach are frequently unknown, then the model requires extra estimates taken from the current observations, see \cite{Jianping(2012)} and  \cite{Li(2015)}.\\

\begin{table}[H]
	\caption{Value-at-Risk for t = 751 to 2971, number of exceptions (in brackets) where the estimated VaR was exceeded by the portfolio loss with $\alpha = 0.01, 0.05, 0.01, 0.005.$}
	\begin{center}
		\begin{tabular}{|c|c|c|c|c|c|c|}
			\hline
		Model &0.1 (222) &0.05 (111)&0.01 (22) &0.005 (11)		\\
			\hline
			Simple summation& 0.0978 (30)& 0.0884 (14) & 0.0625 (4)& 0.0487(3)\\
			\hline
								
		Wasserstein Barycenter-N* & 0.0387 \textbf{(207)} &0.0349 \textbf{(110)}&  0.0247 \textbf{(23)} &0.0192 \textbf{(16)}
		\\
		\hline 	
		Wasserstein Barycenter-N &  0.04891 (225) &0.0442 (130)&  0.03123 (46) &0.0243 (30)\\
		\hline 	
		Var-Covar& 0.0477  (248)& 0.0430  (145)& 0.0304 (52)&  0.0237(38)	\\
		\hline
	\end{tabular}
	\end{center}
	\label{sum1}
\end{table}

A summary of the results are given next:\\
\begin{itemize}
	\item According Kupiec test, the basic multivariate VaR approaches does not predict future losses properly. Exception number in the test period is small under the Simple Summation; a fact explaining future loss overestimation. In contrast, the exception number under the variance-covariance method is large; promoting an underestimation of the future losses.

	\item The  Wasserstein Barycenter-N* and  Wasserstein Barycenter-N model exhibits a remarkable performance for future losses predictions. Moreover, Wasserstein Barycenter methods are stronger respect the other VaR models, because in the same reference time they provide a small exception probability and then a high-level capital reserve is not required.
	
In the set of the analyzed methods, the VaR Forecasting at all confidence levels were achieved in high performance by the proposed Wasserstein Barycenter-N* model. In fact, the Wasserstein Barycenter-N exhibited a better VaR forecasting than Var-Covar and Simple Summation methods in all the confidence levels. Empirical results also demonstrated the known fact that the Simple Approach provides an upper bound for the true VaR. In particular, for a confidence level   of 0.1\%, a VaR of 0.0387 derived by Wasserstein Barycenter-N* was the third part of the value 0.0978  based on the Simple Summation method. In such comparison context, our approach provides a number of possibilities for a wide bank class. Thus, under a conservative  Wasserstein Barycenter VaR compared with the general average, the proposed method, indexed by different types of weights, allow several criteria in order to improve the results. In contrast, the Var-covar method is preferably optimistic. Finally, the notable closed property for the barycenter in the class of location-scale distribution, just require a profuse knowledge of the prior barycenter under the selected distribution. Then, in the Gaussian case,  an exact formula for the value at risk  Wasserstein Barycenter can be derived and applied into the complete reference class of distributions. This opens an interesting perspective for manage risk studies of series based on non Gaussian models, because a complete mathematical description of the Wasserstein Barycenter VaR can be found under the selected prior distribution.

\end{itemize}

\subsubsection{Others Aggregate VaR models}

Robust multivariate methods including Copulas and ARCH models also involve VaR estimation. However, under a big number of assets,  such models produces biased parameter estimations and demand a high computational cost. For completeness we contrast our methods with  those derived by \cite{Jianping(2012)}.
\begin{table}[H]
	\caption{Value-at-Risk for t = 751 to 2971, number of exceptions (in brackets) where the estimated VaR was exceeded by the portfolio loss with $\alpha = 0.01, 0.05, 0.01, 0.005.$}
	\begin{center}
		\begin{tabular}{|c|c|c|c|c|c|c|}
			\hline
		Model & 0.1 (222) &0.05 (111)&0.01 (22) &0.005 (11)	\\
			\hline
					Wasserstein Barycenter-N* & 0.0387 \textbf{(207)} &0.0349 \textbf{(110)}&  0.0247 \textbf{(23)} &0.0192 (16)
			\\
			\hline 	
			Wasserstein Barycenter-N &  0.0489 (225) &0.0442 (130)&  0.0312 (46) &0.0243 (30)
			\\
			\hline
			SJC Copula + GARCH-E&& 0.0558 (124) &0.0104 (23) &0.0041 \textbf{(9)}\\
			\hline
			Bivariate GARCH (BEKK)& & 0.0819 (182)& 0.0338 (75)& 0.0248 (55)	\\
			\hline
			Bivariate GARCH (DCC) &&0.0432 (96)& 0.0140 (31)& 0.0113 (25)	\\
			\hline
			EWMA (Bivariate) &&0.0387 (86)& 0.0144 (32) &0.0104 (23)	\\
			\hline
			GARCH-N (Portfolio) && 0.0666 (148) & 0.0207 (46) &0.0144 (32)\\
			\hline
			GARCH-t (Portfolio) &&0.0693 (154)& 0.0131 (29)& 0.0104 (23)\\
			\hline
			EWMA (Portfolio) &&0.0527 (117) &0.0135 (30)& 0.0099 (22)\\
			\hline
			H.S. (Portfolio)&& 0.1220 (271)& 0.0293 (65) &0.0144 (32)\\
			\hline
		\end{tabular}
	\end{center}
	\label{sum}
\end{table}

Note that the Wasserstein Barycenter-N* model gives a satisfactory VaR estimation for $\alpha=0.1$, $\alpha=0.05$ and  $\alpha = 0.01$. When $\alpha = 0.01$ the estimation equals the result of the hibrid SJC Copula + GARCH-E method; for $\alpha=0.005$ it is near to the hibrid SJC Copula + GARCH-E model, which was proposed by \cite{Jianping(2012)} as the best approach. In fact, note that SJC Copula + GARCH-E method succeeds in the VaR forecasting at a  99.5\% confidence level, but it fails for 90\% and 95\%. Now, kupiec test given in Table \ref{bary} highlights that our methods show a high performance in all the  confidence levels. Moreover, inclusion of GARCH models should improved our methods, but for space reasons, we will leave that study for a future work. Given the robustness of GARCH models, the VaR estimation will require a few violentios and then a minor capital reserve should be demand.\\

 \subsection{COVID-19: 2020 Stock Market Crashs}

We end this section by showing the behavior of our method in a crucial modern problem. Explicitly, we will research the impact of COVID-19 in the 2020 Stock Market Crashs measured in the Wall Street indexes of  NASDAQ Composite, S\&P 500 and Dow Jones Industrial Average. These indexes reported historical loss levels, only registered in the financial crisis of 2008.\\

As usual, the complete data set is split in sample and test periods. Sample data ranges from January 4th, 2010 to December 31st,
2018. The 2264 daily return for every each stock index also refers as the required historical
data for a plausible VaR estimation. The test data, for detecting the performance of the value at risk models, ranges from January 2nd,
2019 to July 31st, 2020. The  VaR is estimated for each day in the test period (399 days), with the information offered by the 2663 observations ahead of it. Finally, the performance of the VaR model is measured by a comparison of the current loss and the estimated value at risk. The division for the sample and test periods are summarized in Table \ref{period}.\\

\begin{table}
	\caption{Sample and test periods}
	\begin{center}
		\begin{tabular}{|c|c|c|c|c|c|c|}
			\hline
			Period& In-sample period &test period &Total\\
			\hline
			Date& 4/1/2010-31/12/2018& 2/1/2019-31/7/2020&
			\\
			\hline
			Number of observations &2264 &399& 2663
			\\
			\hline 	
		\end{tabular}
	\end{center}
	\label{period}
\end{table}

Basic descriptive statistics analysis for the interval 2010-2019 and the 2020 year are given in Tables \ref{stati2} and \ref{stati3}, respectively.

\begin{table}
	\caption{Descriptive statistics for daily log-returns of Nasdaq, S\&P500 and Dow Jones during 2010-2019.}
	\begin{center}
		\begin{tabular}{|c|c|c|c|c|c|c|}
			\hline
			Statistics & Dow Jones 	& S\&P500&	Nasdaq Composite  \\
			\hline
			
			Mean& 0.00040
			& 0.00042& 0.00055
			\\
			\hline
			
			Median&   0.00058
			& 0.00060 & 0.00093
			\\
			\hline 	
			Standard Deviation &0.00887
			&0.00932& 0.01076 			
			\\
			\hline 	
			Min.& -0.057061
			&  -0.06896
			&-0.07149
			\\
			\hline
			Max.& 0.04864
			& 0.04840
			& 0.05672
			\\
			\hline
			Excess of Kurtosis& 4.04009
			&4.59922 &3.43256
			\\
			\hline
			Asymmetry& -0.47697
			&-0.49677 &
			-0.45537
			\\
			\hline
		\end{tabular}
	\end{center}
	\label{stati2}
\end{table}

\begin{table}
	\caption{Descriptive statistics for daily log-returns of Nasdaq, S\&P500 and Dow Jones in 2020}
	\begin{center}
		\begin{tabular}{|c|c|c|c|c|c|c|}
			\hline
			Statistics & Dow Jones 	& S\&P500&	Nasdaq Composite  \\
			\hline
			
			Mean&  -0,00052 & 0,00008 	& 0,00123
			
			\\
			\hline
			
			Median&  0,00109 & 0,00284  & 0,00427

			\\
			\hline 	
			Standard Deviation &0,02929		 &0,02718		
			&0,02703
			
			\\
			\hline
			Min.& -0,13841
			& -0.12765 	&-0.13149
			
			\\
			\hline
			Max.& 0.10764 	& 0.08968 	& 0.08935
			
			\\
			\hline
			Excess of Kurtosis& 5,52874		&5,40858	
			&5,62956
			
			\\
			\hline
			Asymmetry& -0,64012			&-0,69653		&
			-0,90154
			
			\\
			\hline
		\end{tabular}
	\end{center}
	\label{stati3}
\end{table}

Note that 2020 exhibits extremal high volatilities, in fact all the indexes highlight a notorious difference between that year and the interval 2010-2019. In addition, the minimum returns of the three indicators were reached on March 16th, 2020.

Next we compare the forecasting performance of the Wasserstein Barycenter-N* method with the basic multivariate VaR models. The best challenge for both methods resides in the financial market behavior of the 2020 year and its high volatility. First, Figure \ref{fig:grafica1} shows the forecast of the next trading day VaR for 2019 and 2020 by using both methods. Under a one day holding period, the models were computed for a 99\% confidence level. 

\begin{figure}
	\centering
	\includegraphics[width=0.9\linewidth]{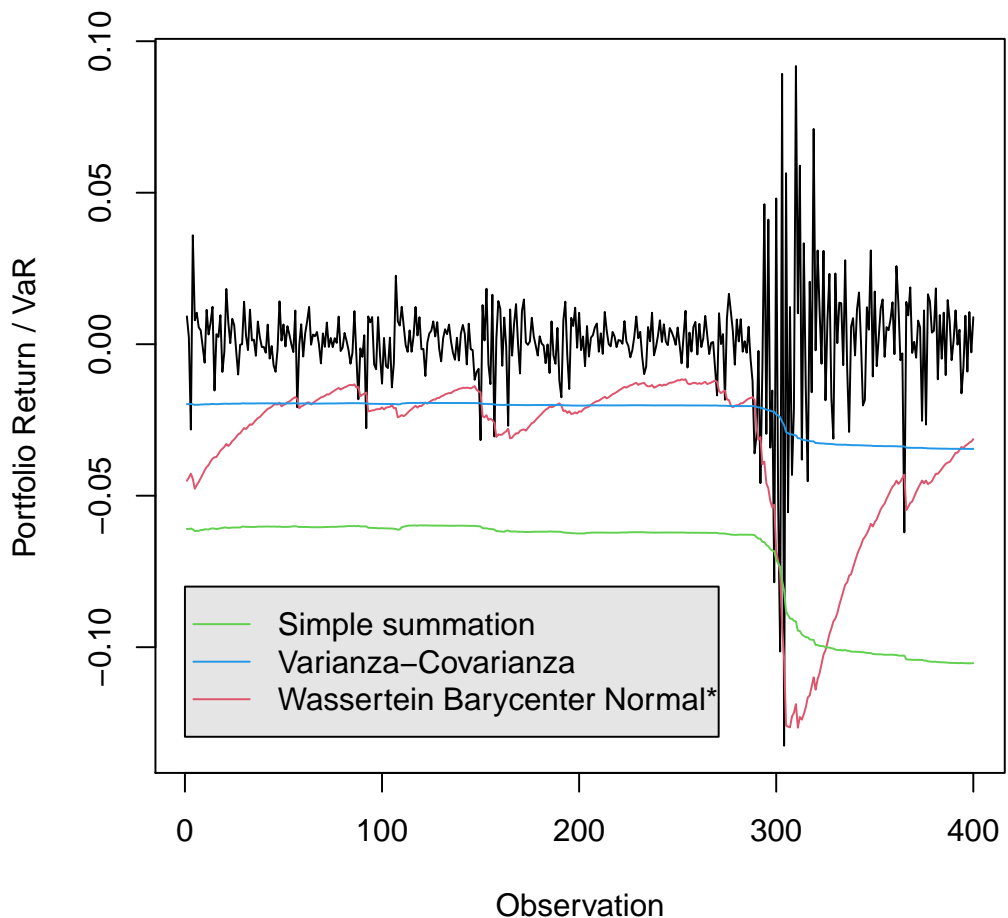}
	\caption{Daily Return series and VaR estimates}
	\label{fig:grafica1}
\end{figure}
As it can be seen the basic multivariate VaR models provide poor estimates for future losses. The VaR estimates rises improperly VaR estimates and unacceptable test period exceptions. However, the VaR estimation given by the Wasserstein Barycenter-N* VaR model is highly accurate. The new approach fits the volatile movements of the returns and predicts future losses notably, in comparison with the basic multivariate VaR approach. Moreover, according to  Figure \ref{fig:grafica1} the variance-covariance approach cannot follow the strong volatility since February 2020; and always presents underestimation. In terms of Simple Summation model, the test period shows lower violations but a strong conservative characteristic is noted. Finally, our proposed methods reach satisfactory VaR forecast in normal periods and extremal periods for  high volatility of the periods perform  The results indicate that the proposed approach provides satisfactory forecasts of VaR not only for the "normal periods" but also for the periods of high volatility due to the COVID-19 stock market influence.

\section{Concluding remarks}

This work has proposed a new multivariate risk measure model  based on the Wasserstein barycenters of probability measures under a location and scatter class. The method was compared with basic and sophisticated models under a portfolio characterized by S\&P500 and Nasdaq stock indices. The new model was also compared in United States market indices of high volatility during the current COVID-19. Kupiec test was used for assessing the performance of the existing and new methods.

The new approach is based on a notable property: Wasserstein barycenters of measures supported on location and scatter family belong to the same class. Then the paper proposed exact formulae for the Wasserstein Barycenter Value at Risk and the Wasserstein Barycenter Conditional Value at Risk for the addressed location and scatter family. This promotes a new setting for building robust risk measures and aggregation of different VaR in multiple financial markets. The closed form formulae are easily programmed for applications.

The technique can be used for non Gaussian distributions, opening an interesting future research for more robust risk measures.\\

\section*{acknowledgements}
	
	The authors wish to express their gratitude to the Doctorate in Mathematics (Doctoral School of Mathematics, IT and Telecommunications, University of Toulouse, Toulouse, France) and to the Doctorate in Modelling and Scientific Computing (Faculty of Basic Sciences, University of Medellin, Medellin, Colombia).

% Authors must disclose all relationships or interests that 
% could have direct or potential influence or impart bias on 
% the work: 
%
\section*{Conflict of interest}
The authors declare that they have no conflict of interest.

\end{document}